\newtheorem{theorem}{Theorem}
\newtheorem{lemma}[theorem]{Lemma}
\theoremstyle{definition}
\title{On coefficients of operator product expansions for quantum field theories with ordinary, holomorphic, and topological spacetime dimensions}
\author{Luigi Alfonsi\orcidlink{0000-0001-5231-2354}\and Hyungrok Kim~(\begin{CJK*}{UTF8}{bsmi}金炯錄\end{CJK*})\orcidlink{0000-0001-7909-4510}\\[1em]
\texttt{\{l.alfonsi,h.kim2\}@herts.ac.uk}\\[1em]
Department of Mathematics and Theoretical Physics\\University of Hertfordshire\\Hatfield, Hertfordshire \textsc{al10~9ab}, United Kingdom
}
\begin{document}
\maketitle
\begin{abstract}
In many quantum field theories (such as higher-dimensional holomorphic field theories or raviolo theories), operator product expansions of local operators can have as coefficients not only ordinary functions but also `derived' functions with nonzero ghost number, which are certain elements of sheaf cohomology.
We analyse the `derived' functions that should appear in operator product expansions for a quantum field theory with an arbitrary number of topological, holomorphic and/or ordinary spacetime dimensions and identify necessary and sufficient conditions for such `derived' functions to appear.
In particular, theories with one topological spacetime dimension and multiple ordinary spacetime dimensions provide a smooth analogue of the (holomorphic) raviolo.
\end{abstract}

\tableofcontents

\section{Introduction}
One way to formalise quantum field theories is in terms of their operator product expansion (OPE) or correlation functions. For example, in the case of conformal field theories, the operator product expansion and correlation functions can be rigorously defined thanks to conformal symmetry and can be analysed either analytically or numerically using conformal bootstrap, although for more general field theories there are issues with regularisation and renormalisation.

A field theory lives on a spacetime, and this spacetime may have symmetries, such that moving some of the insertions around in certain directions does not change the OPE or correlation functions. Some symmetries act on all spacetime points simultaneously (such as Poincaré symmetry, conformal symmetry etc.); however, some others act on each individual point, such that the correlation functions
\begin{equation}
    \langle O_1(x_1)\dotsm O_n(x_n)\rangle
\end{equation}
remain invariant when only one of the \(x_i\) are changed along
\begin{equation}
    x_i \mapsto x_i + \deltaup x_i.
\end{equation}
If the \(\deltaup x_i\) is real, then we say that the corresponding spacetime directions are topological. If \(\deltaup x_i\) is not purely real but rather complex, then we can (by selecting a suitable complex structure) arrange things so that \(\deltaup x_i\) is anti-holomorphic, in which case we say that the corresponding spacetime directions are holomorphic. A~priori, therefore, at a given point \(x\in M\) in spacetime \(M\), the complexified tangent space \(\mathrm T^{\mathbb C}_xM\) then decomposes into three complex subspaces: topological, holomorphic (or anti-holomorphic), or neither (as in ordinary quantum field theory). Quantum field theories with mixed topological and holomorphic directions have recently attracted interest \cite{Saberi:2019fkq,Bomans:2023mkd,Gaiotto:2024gii} as they appear naturally in twists of supersymmetric field theories \cite{2011arXiv1111.4234C,Elliott:2020ecf}.

A general analysis of a field theory with all three kinds of spacetime directions is clearly a very complicated task. In this paper, we focus on a small aspect of this, namely the higher structure of coefficient `functions' in OPEs. In a one-dimensional holomorphic field theory, where spacetime is \(\mathbb C\), one expects OPEs to be of the form
\begin{equation}
    O_i(z)O_j(0) \sim \sum_i C_{ijk}(z)O_k(0),
\end{equation}
where the coefficient functions \(C_{ijk}(z)\) are functions holomorphic on \(\mathbb C\setminus\{0\}\). On the other hand, consider what happens in higher dimensions. For a higher-dimensional holomorphic theory, living on \(\mathbb C^n\), a naïve OPE of the form
\begin{equation}
    O_i(z)O_j(0) \sim \sum_i C_{ijk}(z)O_k(0)
\end{equation}
where now \(C_{ijk}(z)\) are functions holomorphic on \(\mathbb C^n\setminus\{0\}\) is not tenable because of the Hartogs extension theorem \cite{hartogs}, which implies that all such functions are entire (i.e.\ holomorphic on all of \(\mathbb C^n\)).
Instead, it has been argued \cite{williamsthesis} that in such cases the `coefficients' of the OPE should rather be valued in the \emph{sheaf cohomology} \(\operatorname H^\bullet(\mathbb C^n\setminus\{0\},\mathcal O)\) of the sheaf \(\mathcal O\) of holomorphic functions on \(\mathbb C^n\setminus\{0\}\). 
In derived algebraic geometry, it is a general philosophy that elements of sheaf cohomology should be interpreted as a generalisation of the notion of sections of a sheaf (in this case, of the notion of holomorphic functions); in particular, the zeroth cohomology \(\operatorname H^0(\mathbb C^n\setminus\{0\},\mathcal O)\) reduces to the notion of ordinary holomorphic functions. (Furthermore, reassuringly, in the complex one-dimensional case \(n=1\) the sheaf cohomology is concentrated in degree zero, so that higher phenomena do not occur.)

Another occurrence of such sheaf cohomology as OPE coefficients occurs in the so-called raviolo vertex algebras \cite{Garner:2023zqn,Garner:2023zko,Alfonsi:2024qdr}, which describe three-dimensional theories with one topological direction and one holomorphic direction, i.e.\ on \(\mathbb R\times\mathbb C\). In this case, the relevant sheaf of functions is argued to be the sheaf of functions holomorphic along \(\mathbb C\) and constant along \(\mathbb R\), and indeed it has interesting higher cohomology.

What happens, then, when we have a general blend of topological, holomorphic, and ordinary directions --- what kinds of generalised `functions' appear? In the language of sheaf cohomology, we conjecture that, assuming translation symmetry, the OPE coefficients are elements of
\begin{equation}
    \operatorname H^\bullet((\mathbb R^m\times\mathbb C^n\times\mathbb R^p)\setminus\{0\},\mathcal O)
\end{equation}
where \(\mathcal O\) is the sheaf of smooth functions that are constant along \(\mathbb R^m\) and holomorphic along \(\mathbb C^n\). We compute this sheaf cohomology using the Čech-to-derived spectral sequence (see \cref{thm:main}), and argue that it agrees with physical expectations.

This paper is organised as follows. In \cref{sec:main}, we state and formulate the main theorem of sheaf cohomology of the sheaf of constant--holomorphic--smooth functions as above. In \cref{sec:examples}, we then discuss various physical cases and the interpretation of the higher cohomology in various cases. \Cref{sec:theory-examples} contains sketches of possible constructions of theories with a mixture of topological, holomorphic, and ordinary spacetime directions.
\Cref{sec:discussion} gestures towards the multipoint case and possible generalisations.

\section{Local geometries}\label{sec:main}
Consider a quantum field theory on \(m+2n+p\) spacetime dimensions, of which \(m\) are topological, \(2n\) pair up into \(n\) complex holomorphic dimensions, and \(p\) are ordinary (neither topological nor holomorphic). That is, spacetime locally appears as
\begin{equation}\mathbb R^m\times\mathbb C^n\times\mathbb R^p.\end{equation}
Let us coordinatise the above as \((x_1,\dotsc,x_m,z_1,\dotsc,z_n,y_1,\dotsc,y_p)\), such that OPEs and correlation functions do not depend on \(x_1,\dotsc,x_m\) or \(\bar z_1,\dotsc,\bar z_n\).

On any open subset \(U\subset\mathbb R^m\times\mathbb C^n\times\mathbb R^p\), define
\begin{equation}\label{eq:sheaf}
    \mathcal O(U) = \left\{
        f\in\mathcal C^\infty(U,\mathbb C)
        \middle|
        0 = 
        \frac{\partial f}{\partial x_1} = \dotsb
        =
        \frac{\partial f}{\partial x_m}
        =\frac{\partial f}{\partial\bar z_1}
        =\dotsb=\frac{\partial f}{\partial\bar z_n}
    \right\},
\end{equation}
that is, the space of complex-valued smooth functions locally constant along \(x_1,\dotsc,x_m\) and holomorphic along \(z_1,\dotsc,z_n\). It is clear that the above defines a sheaf on \(\mathbb R^m\times\mathbb C^n\times\mathbb R^p\) equipped with the ordinary (i.e.\ analytic) topology. Thus it makes sense to speak of its sheaf cohomology.

Define the function spaces
\begin{subequations}\label{eq:function_spaces}
\begin{align}
    Z_{n,p} &= \mathcal O(\mathbb C^n)\otimes\mathcal C^\infty(\mathbb R^p,\mathbb C) \\
    \tilde Z_{n,p} &= \mathcal O(\mathbb C^n)\otimes\mathcal C^\infty(\mathbb R^p\setminus\{0\}) \\
    Y_{n,p} &= \begin{cases}
        (\operatorname H(\mathbb C^n\setminus\{0\})/Z_{n,p})[n-1]\otimes\mathcal C^\infty(\mathbb R^p) &\text{if \(n\ge1\)} \\
        \mathcal C^\infty(\mathbb R^p) &\text{if \(n=0\)}
    \end{cases}\\
    \tilde Y_{n,p} &= \begin{cases}
        (\operatorname H(\mathbb C^n\setminus\{0\})/Z_{n,p})[n-1]\otimes\mathcal C^\infty(\mathbb R^p\setminus\{0\}) &\text{if \(n\ge1\)} \\
        \mathcal C^\infty(\mathbb R^p\setminus\{0\}) &\text{if \(n=0\)},
    \end{cases}
\end{align}
\end{subequations}
where in the above \(\mathcal O(\mathbb C^n)\) is the space of holomorphic functions on \(\mathbb C^n\), and \(\mathcal C^\infty(\mathbb R^p,\mathbb C)\) is the space of complex-valued smooth functions on \(\mathbb R^p\), and \(\otimes\) is the topological tensor product of complex nuclear Fréchet spaces\footnote{For general Fréchet topological vector spaces, one has a family of possible topological tensor products, of which the `smallest' is the injective tensor product and the `largest' is the projective tensor product; for nuclear Fréchet spaces, these two agree, so that it makes sense to speak of \emph{the} topological tensor product.}; these are all nuclear Fréchet spaces (concentrated in degree zero --- note that \(\operatorname H(\mathbb C^n\setminus\{0\},\mathcal O)\) is concentrated in degrees \(0\) and \(n-1\), and if \(n>1\) then \(\operatorname H^0(\mathbb C^n\setminus\{0\},\mathcal O)=Z\)).

In particular, \(Z_{0,p}=\mathcal C^\infty(\mathbb R^p,\mathbb C)\), and \(Y_{1,p}=(\mathcal O(\mathbb C\setminus\{0\})/\mathcal O(\mathbb C))\otimes\mathcal C^\infty(\mathbb R^p)\); when \(p=0\), then \(\tilde Z_{n,0}=\tilde Y_{n,0}=0\) since \(\mathcal C^\infty(\varnothing)=0\) is the zero-dimensional vector space.

\begin{theorem}\label{thm:main}
Consider \((\mathbb R^m\times\mathbb C^n\times\mathbb R^p)\setminus\{0\}\) equipped with the sheaf \(\mathcal O\) of complex-valued smooth functions that are constant along \(\mathbb R^m\) and holomorphic along \(\mathbb C^n\). Then its sheaf cohomology is
\begin{equation}\label{eq:cohomology}
    \operatorname H((\mathbb R^m\times\mathbb C^n\times\mathbb R^p)\setminus\{0\},\mathcal O)
    =\begin{cases}
        Z_{n,p}\oplus(\tilde Y_{n,p}/Y_{n,p})[-m-n] &\text{if \(p\ge1\)}\\
        Z_{n,p}\oplus Y_{n,p}[1-m-n]&\text{if \(mn>0\) and \(p=0\)}\\
        0 &\text{if \(m=n=p=0\)}.
    \end{cases}
\end{equation}
\end{theorem}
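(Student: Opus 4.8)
The plan is to avoid computing the punctured cohomology directly and instead obtain it from the unpunctured space through the excision (local cohomology) long exact sequence, with the local cohomology at the origin factorised over the three kinds of directions by a Künneth argument. First I would record that \(\mathcal O\) is the completed external tensor product
\[
  \mathcal O \;=\; \underline{\mathbb C}_{\mathbb R^m}\boxtimes\mathcal O_{\mathbb C^n}\boxtimes\mathcal C^\infty_{\mathbb R^p}
\]
of the (locally) constant sheaf along \(\mathbb R^m\), the holomorphic structure sheaf along \(\mathbb C^n\), and the soft sheaf of smooth functions along \(\mathbb R^p\); on a basis of product opens this is the statement that a smooth function which is constant in \(x\) and holomorphic in \(z\) factors as the corresponding topological tensor product, which holds by Grothendieck's kernel theorem because all spaces in play are nuclear Fréchet. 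Since \(\mathcal C^\infty_{\mathbb R^p}\) is soft and \(\mathcal O_{\mathbb C^n},\underline{\mathbb C}_{\mathbb R^m}\) are acyclic on the Stein/contractible factors, the cohomology of the \emph{unpunctured} space \(\mathbb R^m\times\mathbb C^n\times\mathbb R^p\) is concentrated in degree zero and equals \(Z_{n,p}=\mathcal O(\mathbb C^n)\otimes\mathcal C^\infty(\mathbb R^p)\).

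The heart of the argument is the local cohomology \(R\Gamma_{\{0\}}\) at the origin. Because \(\{0\}\) is the product of the three coordinate origins and \(\mathcal O\) is an external product, a Künneth formula for sections with support gives
\[
  R\Gamma_{\{0\}}(\mathcal O)\;\simeq\;R\Gamma_{\{0\}}\!\big(\underline{\mathbb C}_{\mathbb R^m}\big)\otimes R\Gamma_{\{0\}}\!\big(\mathcal O_{\mathbb C^n}\big)\otimes R\Gamma_{\{0\}}\!\big(\mathcal C^\infty_{\mathbb R^p}\big).
\]
Each factor I would read off from the triangle \(R\Gamma_{\{0\}}\to R\Gamma(\text{full})\to R\Gamma(\text{punctured})\xrightarrow{+1}\). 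The constant sheaf gives \(\mathbb C[-m]\), coming from \(\tilde H^\bullet(S^{m-1})\). The holomorphic sheaf gives \(Q_n[-n]\), where \(Q_n:=(\operatorname H(\mathbb C^n\setminus\{0\})/Z)[n-1]\) sits in degree zero: for \(n\ge2\), Hartogs makes \(H^0(\mathbb C^n\setminus\{0\})=Z\), so only \(H^{n-1}(\mathbb C^n\setminus\{0\})\) survives, placed in degree \(n\), while for \(n=1\) one keeps \(\mathcal O(\mathbb C^\ast)/\mathcal O(\mathbb C)\) in degree one. The smooth sheaf gives \(\big(\mathcal C^\infty(\mathbb R^p\setminus\{0\})/\mathcal C^\infty(\mathbb R^p)\big)[-1]\) for \(p\ge1\) and \(\mathbb C\) for \(p=0\). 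Writing \(Y_{n,p}=Q_n\otimes\mathcal C^\infty(\mathbb R^p)\) and \(\tilde Y_{n,p}=Q_n\otimes\mathcal C^\infty(\mathbb R^p\setminus\{0\})\) as in \cref{eq:function_spaces}, assembling yields \(R\Gamma_{\{0\}}(\mathcal O)=(\tilde Y_{n,p}/Y_{n,p})[-m-n-1]\) when \(p\ge1\) and \(R\Gamma_{\{0\}}(\mathcal O)=Y_{n,0}[-m-n]\) when \(p=0\).

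Finally I would feed this into the excision sequence \(\dotsb\to H^k_{\{0\}}\to H^k(\text{full})\to H^k(X)\to H^{k+1}_{\{0\}}\to\dotsb\) with \(X=(\mathbb R^m\times\mathbb C^n\times\mathbb R^p)\setminus\{0\}\). Since \(H^\bullet(\text{full})=Z_{n,p}\) is concentrated in degree zero, the sequence collapses to \(H^0(X)=Z_{n,p}\) and \(H^k(X)\cong H^{k+1}_{\{0\}}\) for \(k\ge1\). For \(p\ge1\) the only surviving local cohomology is in degree \(m+n+1\), giving \(H^{m+n}(X)=\tilde Y_{n,p}/Y_{n,p}\); for \(p=0\) it sits in degree \(m+n\), giving \(H^{m+n-1}(X)=Y_{n,0}\). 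This reproduces exactly the three cases of \cref{eq:cohomology}, with the degenerate low-degree situations (\(m+n\le1\), and \(m=n=p=0\) where \(X=\varnothing\)) checked directly from the same sequence.

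The main obstacle will be functional-analytic rather than homological: one must guarantee the Künneth isomorphism in the category of complexes of nuclear Fréchet spaces, i.e.\ that the topological tensor product is exact and commutes with passage to cohomology. The cleanest route is to perform every \(\otimes\) among the \emph{honest} nuclear Fréchet building blocks \(Q_n\), \(\mathcal C^\infty(\mathbb R^p)\), \(\mathcal C^\infty(\mathbb R^p\setminus\{0\})\) first, and to take the quotient defining \(\tilde Y_{n,p}/Y_{n,p}\) only at the very end; flatness of the nuclear Fréchet space \(Q_n\) then gives \(\tilde Y_{n,p}/Y_{n,p}=Q_n\otimes(\mathcal C^\infty(\mathbb R^p\setminus\{0\})/\mathcal C^\infty(\mathbb R^p))\). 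One must note that this last quotient is typically \emph{not} Hausdorff --- for instance \(1/x\) is a limit of restrictions of globally smooth functions --- so the conclusion is most safely read as an identity of cohomologies as graded vector spaces carrying the indicated structures. Equivalently, the entire computation can be organised as the Čech-to-derived spectral sequence for the cover \(\{x\neq0\}\cup\{z\neq0\}\cup\{y\neq0\}\), or as iterated Mayer--Vietoris peeling off one factor at a time; peeling the soft factor \(\mathbb R^p\) and then the finite-dimensional sphere factor \(\mathbb R^m\) isolates the single genuinely infinite-dimensional input, the classical cohomology of \(\mathbb C^n\setminus\{0\}\).
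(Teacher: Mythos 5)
Your argument is correct in substance, but it takes a genuinely different route from the paper's. The paper never mentions local cohomology: it runs two nested Mayer--Vietoris arguments, first computing \(\operatorname H((\mathbb R^m\times\mathbb C^n)\setminus\{0\},\mathcal O)\) from the two-set cover \((\mathbb R^m\setminus\{0\})\times\mathbb C^n\), \(\mathbb R^m\times(\mathbb C^n\setminus\{0\})\) (\cref{lem:holomorphic_raviolo_computation}), and then the general case from the cover \(((\mathbb R^m\times\mathbb C^n)\setminus\{0\})\times\mathbb R^p\), \(\mathbb R^m\times\mathbb C^n\times(\mathbb R^p\setminus\{0\})\), using the nuclear-Fr\'echet K\"unneth theorem of \cite{zbMATH03304714} to evaluate patches and intersections. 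You instead localise at the origin via the excision triangle and factorise \(\mathrm R\Gamma_{\{0\}}\) over the three kinds of directions. The content is equivalent --- \(\mathrm R\Gamma_{\{0\}}\) of a product is exactly the iterated shifted cone over the restriction maps of the factors, which is what iterated Mayer--Vietoris plus ordinary K\"unneth computes --- but your packaging is more symmetric in the three factors, makes the degree bookkeeping transparent (each factor contributes a single concentrated shift), and actually yields the \(p=0\) formula for all \(m+n\ge1\) rather than only \(mn>0\) as stated in the theorem (consistently with the examples \((m,0,0)\) and \((0,n,0)\) treated in \cref{sec:examples}, which the theorem's stated hypothesis excludes). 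The one step you assert rather than prove is the K\"unneth formula \emph{with supports}, \(\mathrm R\Gamma_{\{0\}}(\mathcal F\boxtimes\mathcal G)\simeq\mathrm R\Gamma_{\{0\}}(\mathcal F)\otimes\mathrm R\Gamma_{\{0\}}(\mathcal G)\); this is not an off-the-shelf statement for topological sheaves of nuclear Fr\'echet spaces, and the cleanest way to justify it is precisely to represent each \(\mathrm R\Gamma_{\{0\}}\) as a shifted cone of \(\Gamma(\text{full})\to\mathrm R\Gamma(\text{punctured})\) on Fr\'echet (\v Cech) representatives and apply the ordinary K\"unneth theorem to the tensor product of these complexes --- i.e.\ to fall back on the paper's own argument. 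Your functional-analytic caveat is well taken and applies equally to the paper's statement: the image of \(Y_{n,p}\to\tilde Y_{n,p}\) is dense, so \(\tilde Y_{n,p}/Y_{n,p}\) is non-Hausdorff, and both proofs should be read as computing cohomology as abstract graded vector spaces, with all tensor products taken among honest nuclear Fr\'echet representatives before passing to quotients. Finally, in the edge cases where the local cohomology sits in degree one (\(m+n=0\) with \(p\ge1\), or \(m+n=1\) with \(p=0\)) your sequence produces an extension \(0\to Z_{n,p}\to\operatorname H^0(X)\to\operatorname H^1_{\{0\}}\to0\) rather than a direct sum; as you note, these cases are checked directly (vector-space extensions split, and for \((m,n,p)=(0,1,0)\) the Laurent decomposition even splits it canonically), a looseness shared by the paper's own write-up.
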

The proof of this theorem, a straightforward computation in sheaf cohomology, is given in \cref{app:appendix}.

Given this computation, we make the following physical hypothesis: the possible coefficients in an OPE in a quantum field theory on \(\mathbb R^m\times\mathbb C^n\times\mathbb C^p\) with \(m\) topological, \(n\) holomorphic, and \(p\) ordinary spacetime directions are given by the sheaf cohomology \eqref{eq:cohomology}. In particular, we see that higher structure exists (i.e.\ we must use derived functions rather than ordinary functions) precisely when \(m+n>0\) and \(p>0\) or when \(m+n>1\) and \(p=0\). 
In the next section, we will compare this hypothesis with physical expectations.

Note that the result of the above computation is not local in the sense that it assumes that spacetime is globally --- not just locally --- of the form \(\mathbb R^m\times\mathbb C^n\times\mathbb R^p\), and will not be applicable to spacetimes with nontrivial topology. In addition, for non-conformal field theories, operator product expansions may only exist in a limit where the two insertion points become arbitrarily close. In that case, one may take the following projective limit at some fixed point \(x\in\mathbb R^m\times\mathbb C^n\times\mathbb R^p\):
\begin{equation}\label{eq:germs}
    G_{m,n,p}\coloneqq\varprojlim_{U\in x}\operatorname H(U\setminus\{x\},\mathcal O),
\end{equation}
where the limit ranges over open neighbourhoods of \(x\).
(This clearly does not depend on the choice of \(x\).)
Geometrically, this corresponds to taking a pro-object in the category of locally ringed spaces \cite{SB_1958-1960__5__369_0} (reviewed in e.g.\ \cite{zbMATH02193967}).

\section{Examples}\label{sec:examples}
\subsection{Holomorphic field theory}
For the case \((m,n,p)=(0,1,0)\), corresponding to a one-dimensional holomorphic field theory or equivalently a two-dimensional chiral conformal field theory, the space \eqref{eq:cohomology} is the space of germs of holomorphic functions with an isolated singularity (either a pole or an essential singularity) at the origin.
Apart from the analytic convergence condition (which one may remove by taking a formal completion), and allowing essential singularities, this is the OPE that one expects from a one-dimensional holomorphic theory, namely
\begin{equation}
    O_i(z)O_j(0)=\sum_k\sum_{l\in\mathbb Z}c_{ijk}^{(l)}z^lO_k(0).
\end{equation}

On the other hand, consider the case where \((m,n,p)=(0,n,0)\) with \(n\ge2\), such as holomorphic Chern--Simons theory. Let us recall the following well-known lemma:
\begin{lemma}\label{lem:sheaf-coho-result}
    Let \(n\ge2\). The sheaf cohomology of the complex manifold \(\mathbb C^n\setminus\{0\}\) is
    \begin{equation}\label{eq:sheaf-coho-result}
        \operatorname H(\mathbb C^n\setminus\{0\},\mathcal O)=
            \mathcal O(\mathbb C^n) \oplus (\mathcal O(\mathbb C^n\cap(\mathbb C^\times)^n)/X_n)[1-n],
    \end{equation}
    where \(X_n\subset\mathcal O((\mathbb C^\times)^n)\) is the space of holomorphic functions \(f\in\mathcal O((\mathbb C^\times)^n)\) that can be represented as
    \begin{equation}
        f = f_1+\dotsb+f_n
    \end{equation}
    where
    \begin{equation}
        f_i\in\mathcal O\left(\left((\mathbb C^\times)^{i-1}\times\mathbb C\times(\mathbb C^\times)^{n-i}\right)\right).
    \end{equation}
\end{lemma}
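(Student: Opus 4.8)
The plan is to compute the sheaf cohomology via Čech cohomology with respect to the cover $\{U_1,\dots,U_n\}$ of $\mathbb C^n\setminus\{0\}$, where $U_i=\{z\in\mathbb C^n: z_i\neq0\}$. For any nonempty $S\subseteq\{1,\dots,n\}$ the intersection $U_S:=\bigcap_{i\in S}U_i$ is biholomorphic to $(\mathbb C^\times)^{|S|}\times\mathbb C^{n-|S|}$, a product of Stein manifolds and hence itself Stein. Cartan's Theorem~B then gives $\operatorname H^{>0}(U_S,\mathcal O)=0$, so the cover is acyclic and Leray's theorem identifies the sheaf cohomology with the cohomology of the Čech complex $\check C^k=\bigoplus_{|S|=k+1}\mathcal O(U_S)$, concentrated in degrees $0\le k\le n-1$.

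The key step is to diagonalise this complex using Laurent expansions. Every $f\in\mathcal O(U_S)$ expands uniquely as $\sum_\alpha c_\alpha z^\alpha$ in which the exponent $\alpha_i$ is an arbitrary integer for $i\in S$ but is constrained to be nonnegative for $i\notin S$. A fixed monomial $z^\alpha$ therefore appears in $\mathcal O(U_S)$ precisely when $S\supseteq N(\alpha)$, where $N(\alpha):=\{i:\alpha_i<0\}$ records the negative exponents. Grouping the Čech complex by monomial yields, for each $\alpha$, a finite-dimensional subcomplex whose degree-$k$ piece is spanned by the sets $S\supseteq N(\alpha)$ with $|S|=k+1$. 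Writing $m=|N(\alpha)|$ and $T=S\setminus N(\alpha)$, this subcomplex is exactly the simplicial cochain complex of the full simplex on the $n-m$ vertices $\{1,\dots,n\}\setminus N(\alpha)$, shifted so that the face $T$ sits in Čech degree $|T|+m-1$.

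I then read off the cohomology from the reduced and unreduced cohomology of a simplex. When $m=0$ the empty face $T=\varnothing$ would sit in degree $-1$, which is absent from the Čech complex; the monomial subcomplex is the unaugmented cochain complex of a contractible simplex, contributing $\mathbb C$ in degree $0$ and nothing higher. These $m=0$ monomials (all exponents nonnegative) assemble into $\mathcal O(\mathbb C^n)$, recovering $\operatorname H^0=\mathcal O(\mathbb C^n\setminus\{0\})=\mathcal O(\mathbb C^n)$ in accordance with Hartogs's theorem \cite{hartogs}. When $1\le m\le n-1$ the empty face appears in nonnegative degree, the complex is the augmented cochain complex of a nonempty contractible simplex, and its reduced cohomology vanishes identically. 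Finally, when $m=n$ the only surviving face is $T=\varnothing$ in degree $n-1$ with zero differential, contributing $\mathbb C$ to $\operatorname H^{n-1}$; these all-negative monomials are precisely those spanning $\mathcal O((\mathbb C^\times)^n)/X_n$, since $X_n$ is by definition spanned by the monomials $z^\alpha$ having at least one nonnegative exponent (equivalently $N(\alpha)\neq\{1,\dots,n\}$). This places all higher cohomology in degree $n-1$ and identifies it with $\mathcal O((\mathbb C^\times)^n)/X_n$, i.e.\ the shift $[1-n]$.

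The main obstacle I anticipate is functional-analytic rather than combinatorial: the monomial decomposition must be shown to respect the Fréchet topologies, so that passing to cohomology commutes with the completed direct-sum decomposition of the Čech complex, and one must verify that $X_n$ is a closed subspace of $\mathcal O((\mathbb C^\times)^n)$ so that the quotient is a well-behaved (Hausdorff) nuclear Fréchet space. Equivalently, one checks that the image of the top Čech differential $\bigoplus_j\mathcal O\bigl((\mathbb C^\times)^{j-1}\times\mathbb C\times(\mathbb C^\times)^{n-j}\bigr)\to\mathcal O((\mathbb C^\times)^n)$ is exactly $X_n$ with closed range. The monomial computation makes the underlying algebraic statement transparent, and uniform convergence of the Laurent series on compact subsets is what upgrades it to the required topological statement.
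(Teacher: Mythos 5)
Your proposal is correct, and it uses the same skeleton as the paper's own proof: the cover of \(\mathbb C^n\setminus\{0\}\) by the coordinate-hyperplane complements \(U_i=\{z_i\neq 0\}\), Steinness of all intersections (domains of holomorphy), Cartan's Theorem~B, and Leray's theorem to reduce to \v Cech cohomology of this cover. Where you genuinely diverge is at the key computational step. The paper disposes of the \v Cech complex in one sentence, asserting that Hartogs' extension theorem makes the cohomology vanish outside top and bottom degrees; but Hartogs by itself only identifies \(\operatorname H^0\) (extension of functions across the puncture), and the vanishing in degrees \(1,\dotsc,n-2\) together with the identification of \(\operatorname H^{n-1}\) with \(\mathcal O((\mathbb C^\times)^n)/X_n\) requires an actual argument. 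Your Laurent-monomial diagonalisation supplies exactly that: grading the \v Cech complex by monomials \(z^\alpha\), each graded piece becomes the (augmented or unaugmented, depending on whether \(N(\alpha)\) is empty, proper, or everything) simplicial cochain complex of a simplex, whose contractibility kills all middle-degree cohomology; the \(m=0\) monomials reassemble into \(\mathcal O(\mathbb C^n)\) (in fact reproving Hartogs in this case rather than invoking it), and the \(m=n\) monomials give \(\operatorname H^{n-1}\cong\mathcal O((\mathbb C^\times)^n)/X_n\), since the image of the top \v Cech differential is precisely \(X_n\). So your route is the more complete one. Your closing caveat is also well placed: the monomial decomposition is a topological (completed) one, so one must check that passing to cohomology commutes with it and that \(X_n\) is closed — this is standard for nuclear Fr\'echet spaces (Laurent coefficients are continuous functionals and the combinatorial contraction can be realised continuously), and it is consonant with the functional-analytic framework the paper sets up in its appendix, but it deserves the explicit mention you give it.
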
\begin{proof}
    
    The complex manifold \(\mathbb C^n\setminus\{0\}\) is covered by the open sets
    \begin{equation}
        \mathcal U=\left\{U_i = \mathbb C^{i-1}\times(\mathbb C\setminus\{0\})\times\mathbb C^{n-i}\middle|i\in\{1,\dotsc,n\}\right\},
    \end{equation}
    which are domains of holomorphy and hence Stein.
    Their intersections are also Stein (because they are also domains of holomorphy), so Cartan's theorem~B implies that \(\mathcal U\) is a Leray cover, and Čech cohomology with respect to \(\mathcal U\) computes the sheaf cohomology.
    
    It is then easy to see that Hartogs' extension theorem shows that the cohomology is trivial except for the top and bottom degrees, which are as in \eqref{eq:sheaf-coho-result}.
\end{proof}

The appearance of this higher-degree cohomology has been discussed in \cite{Saberi:2019fkq,Bomans:2023mkd,Gaiotto:2024gii}; concretely, such OPE coefficients may be extracted using the Bochner--Martinelli kernel that represents the higher cohomology.

\subsection{Ordinary quantum field theory}
Consider the case where \((m,n,p)=(0,0,p)\), which describes a \(p\)-dimensional ordinary quantum field theory. (This includes, for example, \(p\)-dimensional conformal field theories.)
In that case \eqref{eq:cohomology} simply reduces to spaces of smooth functions on \(\mathbb R^p\setminus\{0\}\).
That is, the OPE has the ordinary form
\begin{equation}
    O_i(y)O_j(0)=\sum_kc_{ijk}(y)O_k(0)
\end{equation}
with arbitrary smooth functions \(c_{ijk}\), as expected. (This has used only translation symmetry. Of course, if one has additional symmetry such as rotational symmetry or conformal symmetry, one can further constrain the form of the functions \(c_{ijk}\).)

\subsection{Topological field theory}
Consider the case \((m,n,p)=(m,0,0)\), which corresponds to an \(m\)-dimensional topological field theory. In this case, \eqref{eq:cohomology} is
\begin{equation}
    \operatorname H(\mathbb R^m\setminus\{0\},\underline{\mathbb C}) = \mathbb C\oplus\mathbb C[1-m].\label{eq:tqft-result}
\end{equation}
Physically, it is known \cite{Lurie:2009keu,Scheimbauer:2014zty} that an \(m\)-dimensional TQFT corresponds to \(E_m\)-algebras valued in e.g.\ chain complexes, and these in turn correspond to locally constant factorisation algebras \cite{Costello:2023knl}.
In a topological field theory in \(m\) (real) spacetime dimensions, the \(m\)-point correlators of this theory are given by the little \(m\)-discs operad \(E_m\). Over characteristic zero, the dg-operad associated to the topological operad \(E_m\) is formal \cite{Kontsevich:1999eg,2008arXiv0808.0457L}, i.e.\ quasi-isomorphic to its cohomology. For \(m\ge2\), the cohomology is the \(m\)-Poisson operad \(P_m\). That is, the higher operations of arity \(k\) are given either by the cohomology of the configuration space \(\operatorname{Conf}_k(\mathbb R^m)\), or equivalently by the \(k\)-ary operations of a \(P_{m,\infty}\)-algebra (using Koszul duality of operads). In particular, since \(\operatorname{Conf}_2(\mathbb R^m)\simeq\mathbb S^{m-1}\), the space of binary operations is given by
\begin{equation}
    \operatorname H^\bullet_\mathrm{sing}(\operatorname{Conf}_2(\mathbb R^m);\mathbb R)\cong
    \operatorname H^\bullet_\mathrm{sing}(\mathbb S^{m-1};\mathbb R)\cong\mathbb R\oplus\mathbb R[1-m].
\end{equation}
This is the same space that appears in the definition of the binary part of an \(E_m\)-algebra, so that we get a \(P_m\)-algebra as expected. So there are two ways to multiply local operators: an operator \(\bullet\) by juxtaposition, and a Poisson bracket \([-,-]\) of degree \(1-m\).
That is, the `OPE' in a TQFT is of the form
\begin{equation}
    O_i O_j = \sum_k c_{ijk} O_k + d_{ijk} O_k
\end{equation}
where \(c_{ijk}\in\mathbb C\), the structure constant of the ordinary multiplication \(\bullet\), is a number of degree \(0\), while and \(d_{ijk}\in\mathbb C[1-m]\), the structure constant of the Poisson bracket \([-,-]\), is a number of degree \(m-1\), where \(c_{ijk}+d_{ijk}\) is an element of \(G_{m,0,0}=\mathbb C\oplus\mathbb C[1-m]\) as expected. These two products correspond, respectively, to the following two situations: the product \(\bullet\) simply corresponds to a juxtaposition of the two local operators, while the Poisson bracket \([-,-]\) corresponds to taking a descendent (via the descent equations) of one of the two operators, resulting in an \((m-1)\)-form-valued operator, and integrating it in a cycle around the other operator (\cref{fig:tqft-diagram}). Note that these operations are the simplest cases of correlation functions corresponding to operators inserted along more general linked submanifolds that arise in e.g.\ the discussion of higher-form symmetries \cite{Gaiotto:2014kfa}, cf.\ the review \cite{Bhardwaj:2023kri}.
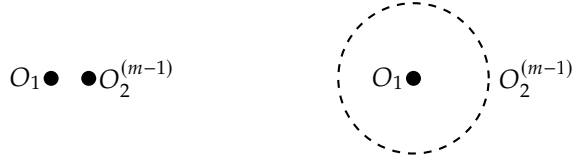
\begin{figure}
\centering
\subcaptionbox{Multiplication by juxtaposition}{
\begin{tikzpicture}
    \node[fill=black, circle, inner sep=2pt] (bullet) at (0, 0) {};
    \node at (0, 0) [left] {$O_1$};
    \node[fill=black, circle, inner sep=2pt] (bullet) at (0.5, 0) {};
    \node at (0.5, 0) [right] {$O^{(m-1)}_2$};
    \draw[thick, white] (0, 0) circle(2);
\end{tikzpicture}}
\qquad
\subcaptionbox{Poisson bracket \([O_1,O_2]\)
}{\begin{tikzpicture}
    \node[fill=black, circle, inner sep=2pt] (bullet) at (0, 0) {};
    \node at (0, 0) [left] {$O_1$};
    \draw[thick, dashed] (0, 0) circle(1);
    \node at (1, 0) [right] {$O^{(m-1)}_2$};
    \draw[thick, white] (0, 0) circle(2);
\end{tikzpicture}}
\caption{The two ways of `multiplying' local operators in a topological quantum field theory: in the first case, one simply juxtaposes the two operators; in the second case, one takes a \((m-1)\)-form-valued descendent \(O_2^{(m-1)}\) of one of the local operators \(O_2\) and wraps it around the other.}\label{fig:tqft-diagram}
\end{figure}

The case \(m=1\), that is, for a one-dimensional TQFT, is slightly different. In that case, the dg-operad associated to the \(E_1\) topological operad is known to be quasi-isomorphic (over characteristic zero) to the operad \(\operatorname{Ass}\) of associative algebras, in which case we still have two products (corresponding to the two different orderings); this still agrees with \eqref{eq:tqft-result}.

\subsection{Holomorphic raviolo}
Consider the case where \((m,n,p)=(1,n,0)\). The case \((m,n,p)=(1,1,0)\) is the case of the raviolo, as discussed in \cite{Garner:2023zqn,Garner:2023zko,Alfonsi:2024qdr}; such theories may be obtained by, for instance, twists of three-dimensional theories \cite{Elliott:2020ecf}.

In this case, the space \eqref{eq:cohomology} becomes
\begin{equation}
    \operatorname H(\mathbb R\times\mathbb C^n\setminus\{0\},\mathcal O)=\mathcal O(\mathbb C^n)\oplus \mathcal O((\mathbb C^\times)^n)/X_n),
\end{equation}
where \(X_n\) is as in \eqref{eq:sheaf-coho-result}.
As noted in \cite{Garner:2023zqn}, when \(n=1\) this may be modelled by the (formal) complex disc with a doubled origin, the so-called raviolo, which is a non-Hausdorff complex manifold (or, algebraically, a non-separated scheme over \(\mathbb C\)).
Ignoring details such as analytic convergence, for \(n=1\), this is the same sheaf that appears in \cite{Garner:2023zqn}. As expected, the sheaf cohomology, alias the space of derived functions, contains both regular functions (in degree zero) and singular (derived) functions (in degree \(n-1\)).

\subsection{Smooth raviolo}
Let us consider the case \((m,n,p)=(1,0,p)\), such that \(M=\mathbb R\times\mathbb R^p\). In this case, the space \eqref{eq:cohomology} becomes
\begin{equation}
    \operatorname H(\mathbb R^1\times\mathbb R^p) = \mathcal C^\infty(\mathbb R^p,\mathbb C) \oplus \left(\mathcal C^\infty(\mathbb R^p\setminus\{0\},\mathbb C)/\mathcal C^\infty(\mathbb R^p,\mathbb C)\right)[-1].
\end{equation}
This is the same as the sheaf cohomology \(\operatorname H(\mathbb R^p_\mathrm{dbl},\mathcal C^\infty)\) of the sheaf smooth functions on the non-Hausdorff manifold \(\mathbb R^p_\mathrm{dbl}\) that is \(\mathbb R^p\) with a doubled origin --- which we may call a `smooth raviolo'.\footnote{It may be surprising to hear that the sheaf of smooth functions, which is ordinarily fine, can have nontrivial higher cohomology; but the acyclicity of the sheaf of smooth functions on a manifold requires the assumptions of paracompactness and Hausdorffness, the latter of which fails here.}

Now, the degree zero part represents the ordinary part of the operator product expansion; physically, it is required to be nonsingular because two operator insertions can always be moved arbitrarily far apart by moving along the topological direction,
\begin{equation}
    O_1(x,\vec y)O_2(0,\vec0)\implies O_1(x+\Delta x,\vec y)O_2(0,\vec0),
\end{equation}
so that divergences cannot occur in the limit \(\vec y\to0\). On the other hand, the higher part \( \left(\mathcal C^\infty(\mathbb R^p\setminus\{0\},\mathbb C)/\mathcal C^\infty(\mathbb R^p,\mathbb C)\right)[-1]\) is allowed to be singular --- indeed, it is the space of germs of a smooth function on \(\mathbb R^p\setminus\{0\}\) near the origin. This may be interpreted as in \cref{fig:smooth-raviolo}: one takes a one-form descendent of one of the operators and wraps it around an \(m\)-cycle (with the help of an induced volume form along the non-topological directions). Such a cycle may be deformed to be arbitrarily close to the insertion point of the other operator, thus corresponding to a germ (\cref{fig:smooth-raviolo}).

Let us give a more invariant description. Suppose that \((1+p)\)-dimensional spacetime \(M\) (assumed diffeomorphic to \(\mathbb R^{1+p}\) for convenience) is foliated into one-dimensional leaves, given by a smooth map \(\pi\colon M\to N\) into a smooth manifold \(N\) (assumed diffeomorphic to \(\mathbb R^p\) for convenience) whose fibres are diffeomorphic to \(\mathbb R\). (We will colloquially call the direction along the leaves `time' and transverse to the leaves `space'.)
On \(N\) there is a (nondegenerate) Riemannian metric \(g_N\) and a corresponding volume form \(\omega_N\). The metric \(g_N\) can be pulled back to produce a generalised metric \(\hat g_M\) on \(M\) that is degenerate along the fibres of \(\pi\). (That is, \(\hat g_M\) only has legs along space, not along time.) Pick a basepoint \(\bullet\in M\).
Let us pick an integration contour \(\Sigma\colon\mathbb S^p\to M\setminus\{\bullet\}\) around the basepoint. For later convenience, we impose the technical condition that
\begin{equation}\label{eq:technical_cond}
    (\mathrm D\Sigma^{\mu_1\dotso\mu_p})\hat g_{\mu_1\nu_1}\dotso\hat g_{\mu_p\nu_p}\propto \pi^*\omega_N=0\text{ on } \pi^{-1}(V)
\end{equation}
for some sufficiently small neighbourhood \(V\ni\pi(\bullet)\) --- that is, the hypersurface \(\Sigma\) is flat near the north and south poles.

Ordinarily, to compute quantities such as area of the hypersurface, one wants a unit normal vector to \(\Sigma\), but this requires a (nondegenerate) metric on \(M\), which we do not (and should not) have due to topological invariance along the fibres of \(\pi\). Instead, what we can define is the projection of the would-be normal vector along spatial (rather than temporal) directions. That is, at \(\theta\in\mathbb S^p\), let \(n_{\Sigma(\theta)}\in\bigwedge^p\mathrm T^*_{\Sigma(\theta)}M\) be the a vector (up to orientation) such that
\begin{itemize}
\item the hypersurface tangent space \(\mathrm DM|_{\theta(p)}(\mathrm T_\theta\mathbb S^p)\) is orthogonal to \(n_{\Sigma(\theta)}\) with respect to \(\hat g_M\), i.e.\ \(\hat g_M(\mathrm DM|_{\theta(p)}(\mathrm T_\theta\mathbb S^p),n_{\Sigma(\theta)})=0\).
\item \(n_{\Sigma(\theta)}\) is normalised, i.e.\ \(\hat g_N(n_{\Sigma(\theta)},n_{\Sigma(\theta)})=1\).
\item \(n_{\Sigma(\theta)}\) is directed outward.
\end{itemize}
Such vectors \(n_\theta\) are not unique, since one can add any element of \(\ker(\pi_*)\) (i.e.\ purely temporal vectors, whose pushforward under \(\pi\) vanishes); the space of such vectors then forms a torsor (affine space) over \(\ker(\pi_*)\). Even though \(n\) is not uniquely defined, the \((p-1)\)-form
\[
    \hat\omega\coloneqq n\mathbin\lrcorner \pi^*\omega
\]
is, however, well defined. Now, suppose that we are given two \(Q\)-closed local operators \(O_1\) and \(O_2\). Then we can regard \(\left\langle O_1^{(1)}(x)O_2(\bullet)\right\rangle\) as a differential one-form on \(M\setminus\{\bullet\}\). Let \(\phi\) be an arbitrary smooth function \(N\setminus\{\pi(\bullet)\}\to\mathbb C\); then \(\pi^*\phi\) is a smooth function on \(M\setminus\pi^{-1}(\pi(\bullet))\). Then we may integrate
\[
    \langle\!\langle O_1,O_2 \rangle\!\rangle \,\coloneq\, \oint_\Sigma \pi^*\phi \left\langle O_1^{(1)}(x)O_2(0)\right\rangle\wedge\hat \omega.
\]
This is well defined near the north and south poles \(\Sigma(\mathbb S^p)\cap \pi^{-1}(\pi(\bullet))\) since \(\hat\omega\) vanishes near the poles due to \eqref{eq:technical_cond}.

In this construction, we manifestly have not used any metric along the temporal direction; hence, assuming that \(\left\langle O_1^{(1)}(x)O_2(0)\right\rangle\) only continuously depends on the connected fibres of \(\pi|_{M\setminus\{\bullet\}}\colon M\setminus\{\bullet\}\to N\), the deformation depicted in \cref{fig:smooth-raviolo} applies.

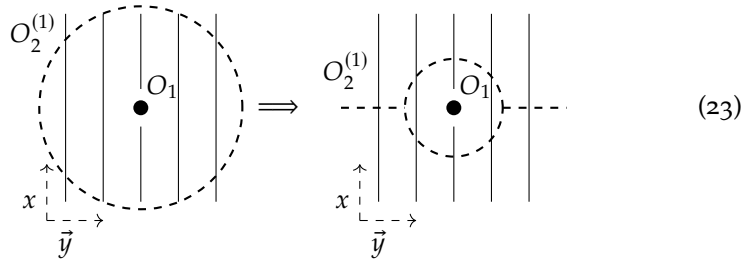
\begin{figure}
\begin{equation}
\begin{tikzpicture}[xscale=0.5,yscale=0.5,baseline={([yshift=-0.5ex](1,1.25))}]
\draw[dashed, thick] (2,2.5) circle(2.7);
\draw (0,0) -- (0,5);
\draw (1,0) -- (1,5);
\draw (2,0) -- (2,2);
\draw (2,3) -- (2,5);
\node [circle, fill=black, inner sep=2pt] at (2,2.5) {};
\draw (2,2.5) node [anchor=south west] {\(\!O_1\)};
\draw (0,4.5) node [anchor=east] {\(O_2^{(1)}\)};
\draw (3,0) -- (3,5);
\draw (4,0) -- (4,5);

\draw (-0.5,0) node [anchor=east] {$x$};
\draw (0,-0.5) node [anchor=north] {$\vec y$};
\draw[->, dashed] (-0.5,-0.5) -- (1,-0.5);
\draw[->, dashed] (-0.5,-0.5) -- (-0.5,1);
\end{tikzpicture}
~\Longrightarrow~
\begin{tikzpicture}[xscale=0.5,yscale=0.5,baseline={([yshift=-0.5ex](1,1.25))}]
\draw[dashed, thick] (2,2.5) circle(1.3);
\draw[dashed, thick] (3.3,2.5) -- (5,2.5);
\draw[dashed, thick] (-1,2.5) -- (0.7,2.5);
\draw (0,0) -- (0,5);
\draw (1,0) -- (1,5);
\draw (2,0) -- (2,2);
\draw (2,3) -- (2,5);
\node [circle, fill=black, inner sep=2pt] at (2,2.5) {};
\draw (2,2.5) node [anchor=south west] {\(\!O_1\)};
\draw (0,3.5) node [anchor=east] {\(O_2^{(1)}\)};
\draw (3,0) -- (3,5);
\draw (4,0) -- (4,5);

\draw (-0.5,0) node [anchor=east] {$x$};
\draw (0,-0.5) node [anchor=north] {$\vec y$};
\draw[->, dashed] (-0.5,-0.5) -- (1,-0.5);
\draw[->, dashed] (-0.5,-0.5) -- (-0.5,1);
\end{tikzpicture}
\end{equation}
\caption{
    For the higher product in an \((1+p)\)-dimensional smooth raviolo theory, with topological coordinate \(x\) and ordinary coordinates \(\vec y\), one takes the one-form descendent \(O^{(1)}_2\) of the operator \(O_2\) and wraps it on an \(p\)-cycle around the other operator \(O_1\) (using an induced volume form for the \(p\) ordinary directions). The cycle can be deformed along the topological coordinate such that the contour becomes arbitrarily close to the position of the operator \(O_1\).
}\label{fig:smooth-raviolo}
\end{figure}

\section{Examples of field theories with mixed topological, holomorphic, and ordinary spacetime directions}\label{sec:theory-examples}

This paper treats derived functions that we conjecture should be the coefficients of operator product expansion in quantum field theories with mixed topological, holomorphic, and ordinary spacetime directions. This raises the evident question of whether there exist interesting examples of such theories.
Theories with a mixture of topological and holomorphic coordinates may be constructed by cohomological twists of theories with super-Poincar\'e symmetry \cite{Elliott:2020ecf}. 
Theories with a mixture of topological and ordinary coordinates cannot be obtained by cohomological twist.
However, theories of this type may be obtained, for example, using the following approaches.

\subsection{Theories arising from products mixing different theories}

\paragraph{Products of theories.}

A direct example is given by simply taking products of theories. Suppose that there is an ordinary field theory \(T_1\) living on \(\mathbb R^a\) and a topological field theory \(T_2\) living on \(\mathbb R^b\). Then we may take the tensor product of the two theories \(T_1\otimes T_2\), which may be viewed as a theory on product manifold \(\mathbb R^a\times\mathbb R^b\), and is topological only along \(\mathbb R^b\).
This is, of course, a trivial example, since correlation functions factorise,
\begin{equation}
    \langle O_{1}\, O_{2}\rangle_{T_{1}\otimes T_{2}}
  \;=\;
  \langle O_{1}\rangle_{T_{1}}\langle O_{2}\rangle_{T_{2}},
\end{equation}
so the construction is dynamically trivial.

\paragraph{$G$-invariant subsector.}
Suppose, now, that two theories \(T_1\) and \(T_2\) have some internal symmetry group \(G\). Then, we may take the subsector \((T_1\otimes T_2)^{G}\) of the tensor product \(T_1\otimes T_2\) consisting of local operators that are invariant under the diagonal action of the symmetry group \(G\). The resulting theory then no longer factorises trivially.

\subsection{Dimensionally reduced theories}

\paragraph{Plain dimensional reduction.}
Consider a product spacetime of the form $M = \mathbb{R}^m\times\mathbb{R}^p$ and suppose that we want to dimensionally reduce an ordinary scalar field theory on such a spacetime to a physical spacetime $\mathbb{R}^p$ and leave $\mathbb{R}^m$ as an internal space.
Let us redefine the zero modes of the original field by $\phi(y) = \mathit{\Phi}(x,y)$.
Notice that the sheaf \eqref{eq:sheaf} reduces to
\begin{equation}
    \mathcal{O}(\mathcal{U}) \,=\, \Big\{f\in\mathcal{C}^\infty(\mathcal{U},\mathbb{C})\,\Big|\, \frac{\partial f}{\partial x_1} = \cdots = \frac{\partial f}{\partial x_m} = 0\Big\}
\end{equation}
on open subsets $\mathcal{U}\subset M$, reproducing the behaviour which is expected by zero modes on the total spacetime.
If, for simplicity, we assume $p>2$, the OPE for the scalar field will be of the form
\begin{equation}
    \mathit{\Phi}(x,y)\mathit{\Phi}(0,0) \;\sim\; \frac{c}{|y|^{p-2}} + \dotsb
\end{equation}
with no singularity in $x$ approaching the origin.

In the special case where $p\neq 0$ and $m=1$, one recovers the smooth raviolo from the previous section.
In the special case where $p=0$, one immediately obtains a purely topological theory, and the OPE $\mathit{\Phi}(x,y)\mathit{\Phi}(0,0)=\phi^2$ reduces to ordinary multiplication of the constant scalar $\phi=\mathit{\Phi}(x,y)$ with no singular coefficient.

\paragraph{Topological reduction.}
From now on, we will refer to the following procedure that turns exactly one real direction topological and while leaving the remaining directions fully smooth as \emph{topological reduction}.
Concretely, given a supersymmetric theory on spacetime of the form
\begin{equation}
M = \mathbb R_{x}\times\mathbb R^p_{y},
\end{equation}
whose R-symmetry group $G_R$ contains at least a $\mathrm{U}(1)_R$ factor, with associated generator $R$. We then select the real line parameterised by $x$ and denote its rotation generator by $J_x$ (equivalently, the infinitesimal rotation in any chosen two-plane of the $\mathbb R^p_y$ slice, viewed as around the $x$-axis). The key step is to form the diagonal twist generator $J'= J_x + R$, and to identify among the original supercharges a linear combination $Q$ whose total weight under the total $J'$ vanishes. By construction, this scalar supercharge satisfies
\begin{equation}
\{Q,Q\}\;=\;P_x,
\end{equation}
the generator of translations along the $x$-line.
We then impose the dimensional-reduction condition $\partial_x\Phi=0$ on every field $\Phi$, truncating to its zero-mode sector in $x$, so that $Q^2=0$ off-shell.
As a result, translations in the $x$-direction become $Q$-exact on the nose, and all correlators of $Q$-closed operators are insensitive to separations in $x$, while the usual metric-dependence remain intact on the smooth $\mathbb R^p_y$ slice.

Following this logic, the minimally supersymmetric Poincar\'e algebras in dimensions $d=p+1\ge3$ whose R-symmetry admits the required copy $\mathrm{U}(1)_R$ are, for example, those corresponding to Super-Yang Mills in 3d $\mathcal{N}=2$, 4d $\mathcal{N}=1$, 5d $\mathcal{N}=1$, and both 6d $\mathcal{N}=(1,0)$ and $\mathcal{N}=(2,0)$. 
In each of these cases this construction implies that the chosen real line $x$ becomes topological, while the remaining $p$ $y$-directions continue to be smooth as the original theory.

It is important to notice that this construction makes by hand use of dimensional reduction and it does not provide full-fledged cohomological twists.
By contrast, the holomorphic--topological theories classified in \cite{Elliott:2020ecf} correspond to \emph{cohomological twists} in the strict sense. In those constructions a supercharge $Q$ is chosen in the full $d$-dimensional theory satisfying
$$\{Q,Q\} \;=\; 0,$$
without imposing any dimensional reduction or discarding of Fourier modes.
Then, from this, it is possible to show that certain anti-holomorphic and real translation generators become $Q$-exact, so that correlators depend only holomorphically on the complex coordinate $z$ and not on $\bar z$, and are topological in the $x$-directions. These are genuine cohomological twists and no direction is dimensionally reduced away.

\subsubsection*{Example: topological reduction of three-dimensional \(\mathcal{N}=2\) super-Maxwell theory}

We begin with the standard action of three-dimensional \(\mathcal{N}=2\) supersymmetric gauge theory with gauge gauge group \(\operatorname U(1)\),
\[
S \;=\;
\int_{M}
\biggl[
  \frac1{e^2}\Bigl(\tfrac12 F_{\mu\nu}F^{\mu\nu} + D_\mu\phi\,D^\mu\phi\Bigr)
  +\;\text{(fermions)} 
\biggr]
\;\operatorname{vol}_{g_M},
\]
where \(M=\mathbb R^2_{x}\times\mathbb R_{y}\) is endowed with a product metric of the form
\(g_M=g_{ij}(y)\,\mathrm{d} y^i\mathrm{d} y^j + \mathrm{d} x^2\), 
\(\phi\) is the real scalar, and the fermions fill out the \(\mathcal{N}=2\) multiplet.
We want to pick a supercharge \(Q\) whose square generates translation in \(x\). 
We then perform the topological reduction by identifying the \(U(1)_R\) rotation with rotations in the \(x\)–direction so that a supercharge \(Q\) becomes a scalar on \(\mathbb R^2_y\) and obeys
\[
Q^2 \;=\; \partial_{x}.
\]
From the representation‐theoretic point of view, one begins with the three‐dimensional $\mathcal{N}=2$ super-Poincar\'e algebra, whose four real supercharges $Q_{\alpha A}$ transform as spinors of $\mathrm{Spin}(3)\simeq \mathrm{SU}(2)$ along $\alpha=1,2$ and carry $\mathrm{U}(1)_R$-charges along $A=\pm1$. It is possible to use the rotation generator $J_x$ of the $\mathbb R^2_y$–plane to form the twisted rotation $J' = J_x + R$ as above. Under the diagonal action, the linear combination of supercharges whose total weight $j+r$ vanishes is given by $Q = Q_{\tfrac12,-\tfrac12} + Q_{\tfrac12,+\tfrac12}$, which becomes a Lorentz scalar on $\mathbb R^2_y$.
A direct computation shows that $\{Q,Q\}\propto P_x$, so that off shell $Q^2$ generates translations along the $y$–direction, while upon truncation to zero modes one recovers a genuine nilpotent differential $Q^2=0$.

In the dimensionally-reduced field variables, suppressing gauge-ghosts, the bosonic fields reorganise as $A = A_i\,\mathrm d y^i+A_{x}\,\mathrm d x$ and $\phi$, and the non-vanishing cohomological BRST variations are
\begin{align}
QA_i&=\psi_i,&
Q\psi_i&=\;D_i\phi,\\
QA_x&=\eta,&
Q\phi&=0,\\
Q\eta&=\;\partial_x\phi, 
\end{align}
with \(\eta\) the fermionic partner of \(A_x\).  One checks \(Q^2=\partial_x\) on every field, and since the action can be written (up to \(Q\)-exact terms) as
\begin{equation}
S \;=\;\int_{\mathbb R_x}\mathrm{d} x\;\int_{\mathbb R^2_y}\! 
  \frac1{e^2}\Bigl(F_{ij}F^{ij} + D_i\phi\,D^i\phi\Bigr)\,\sqrt{\det g(y)}\,\mathrm{d}^2y
\;+\;Q(\cdots),
\end{equation}
it is manifestly metric-independent in the \(x\)-direction while remaining a genuine Yang–Mills theory in \(\mathbb R^2_y\). 
Thus, correlators of \(Q\)-closed operators will depend only on the metric along \(\mathbb R^2_y\), and are completely independent of any choice of metric on the \(x\)-line.  In other words, the theory is smooth in \(y\) and topological in \(x\).

\subsubsection*{Example: topological reduction of $6$d \(\mathcal{N}=(2,0)\) theory}

We will sketch the topological reduction in six dimension by considering the free \(\mathcal{N}=(2,0)\) tensor multiplet on 
\[
M \;=\;\mathbb R^5_{(x^1,\dotsc,x^5)}\times\mathbb R_{y}.
\]
Before twisting, the theory consists of a multiplet containing five scalars \(\{\phi^{I}\}_{I=1,\dots ,5}\), a self-dual two-form \(B_{MN}(x,y)\) such that its curvature satisfies \(H=\*H\), and symplectic-Majorana–Weyl fermions \(\psi_{\alpha A}\).
We can use the copy of \(\mathrm{U}(1)_R\) in \(\mathrm{Sp}(2)_R\) together wi rotations around the \(x\)-axis to twist by the diagonal generator \(J' = J_{y}+R\).
Similarly to the previous example, the scalar supercharge $Q = Q_{\tfrac12,-\tfrac12}+Q_{-\tfrac12,\tfrac12}$ satisfies \(\{Q,Q\}=P_x\).
Restricting to zero modes by dimensional reduction, therefore, makes \(Q\) strictly nilpotent.

In the resulting cohomological description, the fields rearrange into a BRST complex governed by a BRST-differential \(Q\). By imposing the dimensional reduction \(\partial_x=0\), the non-vanishing \(Q\)-variations are of the form
\[
\begin{aligned}
  Q\,\phi^{I}          &= \chi^{I}_{i},      &\qquad Q\,\chi^{I}_{i}      &= 0,\\[2pt]
  Q\,B_{ij}            &= \eta_{ij},         &\qquad Q\,\eta_{ij}         &= 0,\\[2pt]
  Q\,B_{ix}            &= \vartheta_{i},     &\qquad Q\,\vartheta_{i}     &= 0,\\[2pt]
  Q\,\psi^{(+)}_{iA}   &= H^{(+)}_{ijk},     &\qquad Q\,H^{(+)}_{ijk}     &= 0.
\end{aligned}
\]
Here \(\chi^{I}_{i}\) is a fermionic vector of \(\mathrm{Spin}(5)_y\); \(\eta_{ij}\) and \(\vartheta_{i}\) are fermionic forms, and \(H^{(+)}_{ijk}= \tfrac16 \varepsilon_{ijklm} H_{lmy}\) is the self-dual three-form with all indices in the smooth $y$-directions.

The free action then becomes
\[
  S \;=\;
  \int_{\mathbb R_x}\mathrm dx \int_{\mathbb R^{5}_{y}}
  \bigl(
    \tfrac12\,H^{(+)}_{ijk} H^{(+)ijk}
    + \partial_{i}\phi^{I}\,\partial^{i}\phi^{I}
    + \text{(fermions)}
  \bigr)\,\mathrm d^{5}y \,+\, Q(\cdots),
\]
with no metric dependence along \(x\).
Consequently, the theory is topological in the \(x\)-direction while being smooth on the smooth five-manifold \(\mathbb R^{5}_{y}\). Correlators of \(Q\)-closed operators are independent of their separations in \(x\) but display the usual smooth behaviour in the five \(y\)-coordinates.

\subsection{Supersymmetrisation of theories with Newtonian symmetry}

Consider a theory on \(m+1\) spacetime dimensions with Newtonian symmetry, i.e.\ with spacetime translations and spatial rotations but with no boosts (of either Galilean or Lorentzian kind), so that there is absolute time. If we supersymmetrise such theories, it is natural to work with spinors for \(\mathrm{Spin}(m)\) rather than \(\mathrm{Spin}(m,1)\). In that case, the spacetime supersymmetry supergroup will be
\begin{equation}
    \mathbb R_t \times \mathrm{ISO}(m|\mathcal N),
\end{equation}
that is, the direct product of the time translation group and a Euclidean super-Poincaré group. In this case, the usual methods of twisting may be performed on the spatial factor \(\mathrm{ISO}(m|\mathcal N)\) to produce a theory that is topological along all spatial directions. However, for this theory the temporal direction will not be topological.
This gives rise to a cohomology of the form
\begin{equation}
    \operatorname H(\mathbb R^m\times\mathbb R_t,\mathcal{O}) = \mathcal C^\infty(\mathbb R_t,\mathbb C) \oplus \left(\mathcal C^\infty(\mathbb R_t\setminus\{0\},\mathbb C)/\mathcal C^\infty(\mathbb R_t,\mathbb C)\right)[-m].
\end{equation}

\subsection{Supersymmetrisation of theories with reduced spacetime symmetry}

Alternatively, one may start with theories of reduced spacetime symmetry. Suppose that spacetime is \(\mathbb R^p\times\mathbb R^m\), where the two sets of directions are distinguished. Then supersymmetrising one of the groups of coordinates but not the other, we may have a supersymmetry
\begin{equation}
    \mathrm{ISO}(p)\times \mathrm{ISO}(m|\mathcal N),
\end{equation}
and twist to render the \(\mathbb R^m\) coordinates topological while leaving the \(\mathbb R^p\) coordinates alone.
Its OPE takes the schematic form
\begin{equation}
    O_i(x,y)O_{j}(0,0) \;\sim\; F_{ij}^k(y) O_{k}(0,0).
\end{equation}

\section{Discussion}\label{sec:discussion}
The discussion above was limited to the two-point OPE (or, equivalently, three-point correlation functions). It is straightforward to generalise the above to the multi-point case at finite separation by considering analogues of the sheaf \(\mathcal O\) on higher-point configuration spaces (since \((\mathbb R^m\times\mathbb C^n\times\mathbb R^p)\setminus\{0\}\) can be thought of as a factor of the two-point configuration space
\begin{equation}\left\{(x,y)\middle|x,y\in \mathbb R^m\times\mathbb C^n\times\mathbb R^p,\;x\ne y\right\}
=
\left((\mathbb R^m\times\mathbb C^n\times\mathbb R^p)\setminus\{0\}\right)
\times(\mathbb R^m\times\mathbb C^n\times\mathbb R^p).
\end{equation}
However, to properly discuss the multipoint case, one should understand higher analogues of the Borcherds identity, i.e.\ how to compose lower-point correlation functions into higher-point correlation functions. A natural language for doing so would involve chiral algebras \cite{zbMATH02121180,zbMATH07376281,Gaitsgory:1998uq} or factorisation algebras \cite{Costello:2023knl,Amabel:2023hzs,Amabel:2019yrn}. For recent discussion, see \cite{2024arXiv240816787M, Griffin:2025add}.

An interesting tangent is that the sheaf cohomology that we have constructed admits the structure of a \(C_\infty\)-algebra (commutative \(A_\infty\)-algebra), since it is the sheaf cohomology of a sheaf of algebras, not just of vector spaces. That is, the cohomology, in addition to being a graded-commutative algebra, also has products of higher arity, such as a ternary product \(m_3\), a quaternary product \(m_4\), etc. One can keep track of such higher structures by means of the Thom--Sullivan model, as done in for instance \cite{Alfonsi:2024qdr}. Presumably such structures may enter into the composition of OPEs in higher analogues of the Borcherds identity.

One can generalise the considerations in this paper even further to more general patterns of topological and holormophic spacetime directions. Given a smooth manifold \(M\), one can consider a complex subbundle \(D\subset \mathrm T^{\mathbb C}M\) of the complexified tangent bundle \(\mathrm T^{\mathbb C}M\) whose sections are closed under the Lie bracket of vector fields, and consider field theories whose correlation functions are invariant for translations along \(D\). This may be thought of as a complex analogue of a regular foliation (according to Frobenius' theorem). If one relaxes the condition that \(D\) be a subbundle to just requiring it to be a suitable subsheaf, then one may obtain singular foliations, and at singular points of the foliation one can obtain richer behaviour of OPE coefficients than discussed herein. Constructing examples of such theories, however, seems difficult.

\section*{Acknowledgements}
The authors thank Laura Olivia Felder\textsuperscript{\orcidlink{0009-0006-9116-6007}}, Leron Borsten\textsuperscript{\orcidlink{0000-0001-9008-7725}}, and Charles Alastair Stephen Young\textsuperscript{\orcidlink{0000-0002-7490-1122}} for helpful discussion.

\appendix
\section{Proof of the main theorem}\label{app:appendix}
Recall that we are working on a spacetime \(\mathbb R^m\times\mathbb C^n\times\mathbb R^p\) with \(m\) topological directions, \(n\) holomorphic directions (and \(n\) antiholomorphic directions), and \(p\) ordinary directions. We will be working with sheaves of complex topological vector spaces whose sections are Fréchet nuclear spaces; when we write \(\otimes\), we mean the topological tensor product (of Fréchet nuclear sheaves), for which the Künneth formula holds \cite{zbMATH03304714}. The symbols \(Z\), \(\tilde Z\), \(Y\) and \(\tilde Y\) are defined in \eqref{eq:function_spaces}; for brevity we omit the subscripts \(n,p\) in \(Z_{n,p}\) etc.

The main technical tool we use is the Čech-to-derived spectral sequence --- or, rather, the special case involving a cover of two open sets only, in which case it reduces to a Mayer--Vietoris exact sequence for sheaf cohomology.

\begin{lemma}\label{lem:holomorphic_raviolo_computation}
Consider the case where \(p=0\), i.e.\ \((\mathbb R^m\times\mathbb C^n)\setminus\{0\}\) equipped with the sheaf \(\mathcal O\) of complex-valued smooth functions that are constant along \(\mathbb R^m\) and holomorphic along \(\mathbb C^n\). Then its sheaf cohomology is
\begin{equation}
    \operatorname H((\mathbb R^m\times\mathbb C^n)\setminus\{0\},\mathcal O)
    =\begin{cases}
        Z\oplus Y[1-m-n] & \text{if \(mn\ge1\)} \\
        0 & \text{if \(m=n=0\)}.
    \end{cases}
\end{equation}
\end{lemma}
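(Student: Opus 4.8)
The plan is to compute this cohomology by a Mayer--Vietoris argument using the two-element open cover of \(U\coloneqq(\mathbb R^m\times\mathbb C^n)\setminus\{0\}\) by
\begin{equation}
    V_1 = (\mathbb R^m\setminus\{0\})\times\mathbb C^n,\qquad
    V_2 = \mathbb R^m\times(\mathbb C^n\setminus\{0\}),
\end{equation}
whose intersection is \(V_1\cap V_2 = (\mathbb R^m\setminus\{0\})\times(\mathbb C^n\setminus\{0\})\). On any product \(A\times B\) with \(A\subseteq\mathbb R^m\) and \(B\subseteq\mathbb C^n\), the sheaf \(\mathcal O\) is the (completed) external tensor product of the constant sheaf \(\underline{\mathbb C}\) on \(A\) with the sheaf of holomorphic functions on \(B\); since we work with Fréchet nuclear sheaves, the Künneth formula applies and factorises each of the three cohomologies as a topological tensor product.

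First I would record the factor cohomologies. Writing \(A^\bullet\coloneqq\operatorname H(\mathbb R^m\setminus\{0\},\underline{\mathbb C})\cong\mathbb C\oplus\mathbb C[1-m]\) (the cohomology of \(\mathbb S^{m-1}\)) and \(B^\bullet\coloneqq\operatorname H(\mathbb C^n\setminus\{0\},\mathcal O)\), and using that \(\mathbb C^n\) is Stein and \(\mathbb R^m\) is contractible, Künneth gives \(\operatorname H(V_1)=A^\bullet\otimes Z\), \(\operatorname H(V_2)=B^\bullet\) and \(\operatorname H(V_1\cap V_2)=A^\bullet\otimes B^\bullet\). By \cref{lem:sheaf-coho-result} (for \(n\ge2\)) and the Stein property of \(\mathbb C^\times\) (for \(n=1\)), one has \(B^\bullet = Z\oplus Q\) with \(Z\) in degree \(0\) and \(Q\coloneqq\operatorname H(\mathbb C^n\setminus\{0\})/Z\) in degree \(n-1\); moreover the restriction \(r\colon Z=\operatorname H(\mathbb C^n)\to B^\bullet\) is a monomorphism whose cofibre \(\operatorname{cofib}(r)\) is \(Q\) placed in degree \(n-1\).

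Next I would feed this into the Mayer--Vietoris distinguished triangle, which presents \(\operatorname H(U)\) as the homotopy fibre of the difference of restrictions
\begin{equation}
    \sigma = (\operatorname{id}_{A^\bullet}\otimes r)\oplus(-\,s\otimes\operatorname{id}_{B^\bullet})\colon (A^\bullet\otimes Z)\oplus B^\bullet\longrightarrow A^\bullet\otimes B^\bullet,
\end{equation}
where \(s\colon\mathbb C\to A^\bullet\) is the restriction \(\operatorname H(\mathbb R^m)\to\operatorname H(\mathbb R^m\setminus\{0\})\), which in every case picks out the degree-zero constant class. Splitting \(A^\bullet=\mathbb C\cdot1\oplus\tilde A\) with \(\tilde A=\mathbb C[1-m]\), the component of \(\sigma\) from the \(\operatorname H(V_2)\) summand into \((\mathbb C\cdot1)\otimes B^\bullet\) is exactly \(-\operatorname{id}_{B^\bullet}\), hence an isomorphism. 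Cancelling this contractible summand (Gaussian elimination in the derived category) leaves
\begin{equation}
    \operatorname H(U)\;\simeq\;Z\;\oplus\;\bigl(\tilde A\otimes\operatorname{fib}(r)\bigr).
\end{equation}
Since \(\operatorname{fib}(r)=\operatorname{cofib}(r)[-1]\) sits in degree \(n\) while \(\tilde A\) sits in degree \(m-1\), the second summand is \(Q\) concentrated in degree \(m+n-1\), which is precisely \(Y[1-m-n]\); the first summand is the expected \(Z\) in degree \(0\). The degenerate case \(m=n=0\) describes the empty space and so has vanishing cohomology.

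The step I expect to be the main obstacle is the careful bookkeeping of the restriction maps that makes the cancellation legitimate --- in particular, checking that the \(B^\bullet\to B^\bullet\) component of \(\sigma\) really is an isomorphism (so that no residual map out of the \(Z\)-summand survives the elimination) and tracking the single degree shift through \(\operatorname{fib}(r)=\operatorname{cofib}(r)[-1]\) so that \(Q\) lands in degree \(m+n-1\) rather than off by one. The low-dimensional corners \(m=1\) (where \(\tilde A\) already lies in degree \(0\)) and \(n=1\) (where \(B^\bullet\) is concentrated in degree \(0\) and \(Q\) appears purely as a cokernel) should be checked against the uniform formula, but the derived cancellation subsumes them without modification.
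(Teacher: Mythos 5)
Your proposal is correct and follows essentially the same route as the paper: the identical two-set cover \((\mathbb R^m\setminus\{0\})\times\mathbb C^n\) and \(\mathbb R^m\times(\mathbb C^n\setminus\{0\})\), the K\"unneth factorisation of the patch cohomologies, and the Mayer--Vietoris sequence, with \cref{lem:sheaf-coho-result} as input. Your only departure is presentational --- you resolve the sequence by Gaussian elimination on the homotopy fibre rather than by degree-wise inspection of the long exact sequence --- and the bookkeeping (the \(-\operatorname{id}_{B^\bullet}\) component, the shift placing \(Q\) in degree \(m+n-1\)) checks out.
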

\begin{proof}
We may cover \((\mathbb R^m\times\mathbb C^n)\setminus\{0\}\) by \((\mathbb R^m\setminus\{0\})\times\mathbb C^n\) and \(\mathbb R^m\times(\mathbb C^n\setminus\{0\})\). Then the sheaf cohomology on each of the two patches and their intersection is clearly
\begin{align}
    \operatorname H((\mathbb R^m\setminus\{0\})\times\mathbb C^n,\mathcal O)&=
    \begin{cases}
        Z\oplus Z[1-m] & \text{if \(m\ge1\)}\\
        0 & \text{if \(m=0\)}
    \end{cases}\\
    \operatorname H(\mathbb R^m\times(\mathbb C^n\setminus\{0\}),\mathcal O) &= Z\oplus Y[1-n] \\
    \operatorname H((\mathbb R^m\setminus\{0\})\times(\mathbb C^n\setminus\{0\}),\mathcal O) &= \begin{cases} 
    Z \oplus Y[1-n] \oplus Z[1-m] \oplus Y[2-m-n] & \text{if \(m\ge1\)} \\
    0 & \text{if \(m=0\)}.
\end{cases}
\end{align}

The case when \(n=0\) reduces to the singular cohomology of \(\mathbb R^m\setminus\{0\}\), and the case when \(m=0\) reduces to the sheaf cohomology of \(\mathbb C^n\setminus\{0\}\), both of which are well known.
Thus let us assume \(m,n\ge1\).
It is immediate that
\begin{equation}\operatorname H^0((\mathbb R^m\times\mathbb C^n)\setminus\{0\},\mathcal O)=Z.\end{equation}
Thus, the Mayer--Vietoris long exact sequence consists of a direct sum of
\begin{equation}
    Z \xrightarrow{\operatorname{diag}} Z^2 \to\operatorname H^0((\mathbb R^m\times\mathbb C^n\times\mathbb R^p)\setminus\{0\},\mathcal O)
\end{equation}
and
\begin{equation}
    Y[1-n] \xrightarrow{\operatorname{id}} Y[1-n]
\end{equation}
and
\begin{equation}
    Z[1-m] \xrightarrow{\operatorname{id}} Z[1-m]
\end{equation}
and
\begin{equation}
    \operatorname H^{m+n-1}((\mathbb R^m\times\mathbb C^n\times\mathbb R^p)\setminus\{0\},\mathcal O)
    \to Y[2-m-n].
\end{equation}
Inspection of the long exact sequence then shows the claimed result.
\end{proof}

\begin{proof}[Proof of \cref{thm:main}]
Since the case \(m=n=0\) is trivial, we assume \(mn\ne0\).
We may cover \((\mathbb R^m\times\mathbb C^n\times\mathbb R^p)\setminus\{0\}\) by \(((\mathbb R^m\times\mathbb C^n)\setminus\{0\})\times\mathbb R^p\) and \(\mathbb R^m\times\mathbb C^n\times(\mathbb R^p\setminus\{0\})\). We know the sheaf cohomology on each of the two patches and their intersection by lemma~\ref{lem:holomorphic_raviolo_computation}:
\begin{align}
    \operatorname H\left(\left((\mathbb R^m\times\mathbb C^n)\setminus\{0\}\right)\times\mathbb R^p,\mathcal O\right)&=Z\oplus Y[1-m-n]\\
    \operatorname H\left(\mathbb R^m\times\mathbb C^n\times(\mathbb R^p\setminus\{0\}),\mathcal O\right) &=\tilde Z\\
    \operatorname H\left(\left((\mathbb R^m\times\mathbb C^n)\setminus\{0\}\right)\times(\mathbb R^p\setminus\{0\}),\mathcal O\right)
    &=\tilde Z\oplus\tilde Y[1-m-n].
\end{align}
Then the cohomology of \((\mathbb R^m\times\mathbb C^n\times\mathbb R^p)\setminus\{0\}\) may be computed using the Mayer--Vietoris exact sequence for sheaf cohomology.

The Mayer--Vietoris long exact sequence is a direct sum of
\begin{equation}\label{eq:general_computation_diag1}
    \tilde Z\to Z\oplus\tilde Z\to\operatorname H^0((\mathbb R^m\times\mathbb C^n\times\mathbb R^p)\setminus\{0\},\mathcal O)
\end{equation}
and
\begin{multline}\label{eq:general_computation_diag2}
\operatorname H^{m+n-1}((\mathbb R^m\times\mathbb C^n\times\mathbb R^p)\setminus\{0\},\mathcal O)\to Y[1-m-n]\to\tilde Y[1-m-n]\\\to\operatorname H^{m+n}((\mathbb R^m\times\mathbb C^n\times\mathbb R^p)\setminus\{0\},\mathcal O).
\end{multline}
The diagram \eqref{eq:general_computation_diag1} contributes \(Z\) to the cohomology, and the diagram \eqref{eq:general_computation_diag2} contributes \((\tilde Y/Y)[-m-n]\) (if \(p>0\) so that \(Y\to\tilde Y\) is injective) or \(Y[1-m-n]\) (if \(p=0\) so that \(\tilde Y=0\)), as claimed.
\end{proof}

\bibliographystyle{alphaurl}
\newcommand\cyrillic[1]{\fontfamily{Domitian-TOsF}\selectfont \foreignlanguage{russian}{#1}}
\newcommand\ukrainian[1]{\fontfamily{Domitian-TOsF}\selectfont \foreignlanguage{ukrainian}{#1}}
\bibliography{biblio}

\newcommand{\etalchar}[1]{$^{#1}$}
\begin{thebibliography}{BBFT{\etalchar{+}}24}

\bibitem[AKY25]{Alfonsi:2024qdr}
Luigi Alfonsi, Hyungrok Kim{~(\begin{CJK*}{UTF8}{bsmi}金炯錄\end{CJK*})},
  and Charles Alastair~Stephen Young.
\newblock {Raviolo vertex algebras, cochains and conformal blocks}.
\newblock {\em International Mathematics Research Notices}, 2025.
\newblock To be published.
\newblock \href {https://arxiv.org/abs/2401.11917} {\path{arXiv:2401.11917}}.

\bibitem[Ama23]{Amabel:2023hzs}
Araminta Amabel.
\newblock Notes on factorization algebras and {TQFT}s, July 2023.
\newblock \href {https://arxiv.org/abs/2307.01306} {\path{arXiv:2307.01306}}.

\bibitem[BBFT{\etalchar{+}}24]{Bhardwaj:2023kri}
Lakshya Bhardwaj, Lea~E. Bottini, Ludovic Fraser-Taliente, Liam Gladden, Dewi
  Sid~William Gould, Arthur Platschorre, and Hannah Tillim.
\newblock Lectures on generalized symmetries.
\newblock {\em Physics Reports}, 1051:1--87, February 2024.
\newblock \href {https://arxiv.org/abs/2307.07547} {\path{arXiv:2307.07547}},
  \href {https://doi.org/10.1016/j.physrep.2023.11.002}
  {\path{doi:10.1016/j.physrep.2023.11.002}}.

\bibitem[BD04]{zbMATH02121180}
Alexander~Alexandrovich Beilinson{~(\cyrillic{Александр
  Александрович Бейлинсон})} and Vladimir~Gershonovich
  Drinfeld{~(\ukrainian{Володимир Гершонович
  Дрінфельд})}.
\newblock {\em Chiral Algebras}, volume~51 of {\em Colloquium Publications}.
\newblock American Mathematical Society, Providence, Rhode Island, U.S., 2004.
\newblock \href {https://doi.org/10.1090/coll/051}
  {\path{doi:10.1090/coll/051}}.

\bibitem[BW24]{Bomans:2023mkd}
Pieter~Marc Bomans and Jingxiang
  Wu{~(\begin{CJK*}{UTF8}{gbsn}吴敬祥\end{CJK*})}.
\newblock Unravelling the holomorphic twist: Central charges.
\newblock {\em Communications in Mathematical Physics}, 405:290, November 2024.
\newblock \href {https://arxiv.org/abs/2311.04304} {\path{arXiv:2311.04304}},
  \href {https://doi.org/10.1007/s00220-024-05167-4}
  {\path{doi:10.1007/s00220-024-05167-4}}.

\bibitem[CG25]{Costello:2023knl}
Kevin~Joseph Costello and Owen Gwilliam.
\newblock Factorization algebra.
\newblock In Richard~Joseph Szabo and Martin Bojowald, editors, {\em
  Encyclopedia of Mathematical Physics. Volume~5}, pages 569--583. Academic
  Press, second edition, 2025.
\newblock \href {https://arxiv.org/abs/2310.06137} {\path{arXiv:2310.06137}},
  \href {https://doi.org/10.1016/B978-0-323-95703-8.00093-8}
  {\path{doi:10.1016/B978-0-323-95703-8.00093-8}}.

\bibitem[Cos13]{2011arXiv1111.4234C}
Kevin~Joseph Costello.
\newblock Notes on supersymmetric and holomorphic field theories in dimensions
  2 and 4.
\newblock {\em Pure and Applied Mathematics Quarterly}, 9(1):73--165, 2013.
\newblock \href {https://arxiv.org/abs/1111.4234} {\path{arXiv:1111.4234}},
  \href {https://doi.org/10.4310/PAMQ.2013.v9.n1.a3}
  {\path{doi:10.4310/PAMQ.2013.v9.n1.a3}}.

\bibitem[ESW22]{Elliott:2020ecf}
Christopher~J. Elliott, Pavel Safronov{~(\cyrillic{Павел
  Сафронов})}, and Brian~R. Williams.
\newblock A taxonomy of twists of supersymmetric {Y}ang--{M}ills theory.
\newblock {\em \foreignlanguage{latin}{Selecta Mathematica}}, 28(4):73, 2022.
\newblock \href {https://arxiv.org/abs/2002.10517} {\path{arXiv:2002.10517}},
  \href {https://doi.org/10.1007/s00029-022-00786-y}
  {\path{doi:10.1007/s00029-022-00786-y}}.

\bibitem[Gai99]{Gaitsgory:1998uq}
Dennis Gaitsgory.
\newblock Notes on \(2d\) conformal field theory and string theory.
\newblock In Pierre Deligne, David Kazhdan, Pavel~Ilyich
  Etingof{~(\cyrillic{Павел Ильич Этингоф})}, John~Willard
  Morgan, Daniel~Stuart Freed, David~Robert Morrison, Lisa~Claire Jeffrey, and
  Edward Witten, editors, {\em Quantum Fields and Strings: A Course for
  Mathematicians. Volume~2}, pages 1017--1090. American Mathematical Society
  and the Institute for Advanced Study, 1999.
\newblock \href {https://arxiv.org/abs/math/9811061}
  {\path{arXiv:math/9811061}}.

\bibitem[GKSW15]{Gaiotto:2014kfa}
Davide Silvano~Achille Gaiotto, Anton~Nikolaevich
  Kapustin{~(\cyrillic{Антон Николаевич Капустин})},
  Nathan Seiberg, and Brian Willett.
\newblock Generalized global symmetries.
\newblock {\em Journal of High Energy Physics}, 2015(02):172, February 2015.
\newblock \href {https://arxiv.org/abs/1412.5148} {\path{arXiv:1412.5148}},
  \href {https://doi.org/10.1007/JHEP02(2015)172}
  {\path{doi:10.1007/JHEP02(2015)172}}.

\bibitem[GKW24]{Gaiotto:2024gii}
Davide Silvano~Achille Gaiotto, Justin Kulp, and Jingxiang
  Wu{~(\begin{CJK*}{UTF8}{gbsn}吴敬祥\end{CJK*})}.
\newblock Higher operations in perturbation theory.
\newblock March 2024.
\newblock \href {https://arxiv.org/abs/2403.13049} {\path{arXiv:2403.13049}}.

\bibitem[Gri25]{Griffin:2025add}
Colton Griffin.
\newblock {Cohomological vertex algebras}, January 2025.
\newblock \href {https://arxiv.org/abs/2501.18720} {\path{arXiv:2501.18720}},
  \href {https://doi.org/10.48550/arXiv.2501.18720}
  {\path{doi:10.48550/arXiv.2501.18720}}.

\bibitem[Gro60]{SB_1958-1960__5__369_0}
Alexander Grothendieck.
\newblock \foreignlanguage{french}{Technique de descente et théorèmes
  d'existence en géométrie algébriques. {II.} {Le} théorème d'existence en
  théorie formelle des modules}.
\newblock In {\em \foreignlanguage{french}{Séminaire Bourbaki : années
  1958/59 - 1959/60, exposés 169 - 204}}, number~5 in
  \foreignlanguage{french}{Séminaire Bourbaki}, pages 369--390.
  \foreignlanguage{french}{Société mathématique de France}, February 1960.
\newblock URL: \url{http://www.numdam.org/item/SB_1958-1960__5__369_0/}.

\bibitem[GRW23]{Garner:2023zko}
Niklas Garner, Surya Raghavendran, and Brian~R. Williams.
\newblock Higgs and {C}oulomb branches from superconformal raviolo vertex
  algebras.
\newblock 10 2023.
\newblock \href {https://arxiv.org/abs/2310.08524} {\path{arXiv:2310.08524}}.

\bibitem[GW23]{Garner:2023zqn}
Niklas Garner and Brian~R. Williams.
\newblock Raviolo vertex algebras.
\newblock August 2023.
\newblock \href {https://arxiv.org/abs/2308.04414} {\path{arXiv:2308.04414}}.

\bibitem[Har06]{hartogs}
Fritz Hartogs.
\newblock \foreignlanguage{german}{Einige Folgerungen aus der Cauchyschen
  Integralformel bei Funktionen mehrerer Veränderlichen}.
\newblock {\em \foreignlanguage{german}{Sitzungsberichte der Königlich
  Bayerischen Akademie der Wissenschaften zu München,
  Mathematisch-Physikalische Klasse}}, 36:223--242, 1906.
\newblock URL:
  \url{https://archive.org/details/sitzungsbericht26wissgoog/page/223}.

\bibitem[Kau67]{zbMATH03304714}
Ludger Kaup.
\newblock \foreignlanguage{german}{Eine {K}ünnethformel für
  {F}réchetgarben}.
\newblock {\em \foreignlanguage{german}{Mathematische Zeitschrift}},
  97(2):158--168, April 1967.
\newblock \href {https://doi.org/10.1007/BF01111357}
  {\path{doi:10.1007/BF01111357}}.

\bibitem[Kon99]{Kontsevich:1999eg}
Maxim~Lvovich Kontsevich{~(\cyrillic{Максим Львович
  Концевич})}.
\newblock Operads and motives in deformation quantization.
\newblock {\em Letters in Mathematical Physics}, 48(1):35--72, April 1999.
\newblock \href {https://arxiv.org/abs/math/9904055}
  {\path{arXiv:math/9904055}}, \href {https://doi.org/10.1023/A:1007555725247}
  {\path{doi:10.1023/A:1007555725247}}.

\bibitem[KS05]{zbMATH02193967}
Masaki Kashiwara{~(\begin{CJK*}{UTF8}{bsmi}柏原正樹\end{CJK*})} and Pierre
  Schapira.
\newblock {\em Categories and Sheaves}, volume 332 of {\em
  \foreignlanguage{german}{Grundlehren der mathematischen Wissenschaften}}.
\newblock Springer-Verlag, Berlin, Heidelberg, October 2005.
\newblock \href {https://doi.org/10.1007/3-540-27950-4}
  {\path{doi:10.1007/3-540-27950-4}}.

\bibitem[Lur08]{Lurie:2009keu}
Jacob~Alexander Lurie.
\newblock On the classification of topological field theories.
\newblock {\em Current Developments in Mathematics}, 2008(1):129--280, 2008.
\newblock \href {https://arxiv.org/abs/0905.0465} {\path{arXiv:0905.0465}},
  \href {https://doi.org/10.4310/CDM.2008.v2008.n1.a3}
  {\path{doi:10.4310/CDM.2008.v2008.n1.a3}}.

\bibitem[LV14]{2008arXiv0808.0457L}
Pascal Lambrechts and Ismar Volić.
\newblock Formality of the little \({N}\)-disks operad.
\newblock {\em Memoirs of the American Mathematical Society}, 230(1079):1--116,
  November 2014.
\newblock \href {https://arxiv.org/abs/0808.0457} {\path{arXiv:0808.0457}},
  \href {https://doi.org/10.1090/memo/1079} {\path{doi:10.1090/memo/1079}}.

\bibitem[MS24]{2024arXiv240816787M}
Fyodor~Georgievich Malikov{~(\cyrillic{Федор Георгиевич
  Маликов})} and Vadim~V. Schechtman{~(\cyrillic{Вадим В.
  Шехтман})}.
\newblock {Homotopy chiral algebras}, August 2024.
\newblock \href {https://arxiv.org/abs/2408.16787} {\path{arXiv:2408.16787}},
  \href {https://doi.org/10.48550/arXiv.2408.16787}
  {\path{doi:10.48550/arXiv.2408.16787}}.

\bibitem[Sch14]{Scheimbauer:2014zty}
Claudia~Isabella Scheimbauer.
\newblock {\em Factorization Homology as a Fully Extended Topological Field
  Theory}.
\newblock PhD thesis, \foreignlanguage{german}{Eidgenössische Technische
  Hochschule Zürich}, 2014.
\newblock \href {https://doi.org/10.3929/ethz-a-010399715}
  {\path{doi:10.3929/ethz-a-010399715}}.

\bibitem[SW23]{Saberi:2019fkq}
Ingmar~Akira Saberi and Brian~R. Williams.
\newblock Superconformal algebras and holomorphic field theories.
\newblock {\em \foreignlanguage{french}{Annales Henri Poincaré}},
  24(2):541--604, 2023.
\newblock \href {https://arxiv.org/abs/1910.04120} {\path{arXiv:1910.04120}},
  \href {https://doi.org/10.1007/s00023-022-01224-7}
  {\path{doi:10.1007/s00023-022-01224-7}}.

\bibitem[TAKM20]{Amabel:2019yrn}
Hiro~Lee Tanaka, Araminta Amabel, Artem Kalmykov{~(\cyrillic{Артём
  Калмыков})}, and Lukas Müller.
\newblock {\em Lectures on Factorization Homology, \(\infty\)-Categories, and
  Topological Field Theories}, volume~39 of {\em SpringerBriefs in Mathematical
  Physics}.
\newblock Springer-Verlag, Cham, Switzerland, December 2020.
\newblock \href {https://arxiv.org/abs/1907.00066} {\path{arXiv:1907.00066}},
  \href {https://doi.org/10.1007/978-3-030-61163-7}
  {\path{doi:10.1007/978-3-030-61163-7}}.

\bibitem[vEH21]{zbMATH07376281}
Jethro van Ekeren and Reimundo Heluani.
\newblock Chiral homology of elliptic curves and the {Zhu} algebra.
\newblock {\em Communications in Mathematical Physics}, 386(1):495--550, 2021.
\newblock \href {https://arxiv.org/abs/1804.00017} {\path{arXiv:1804.00017}},
  \href {https://doi.org/10.1007/s00220-021-04026-w}
  {\path{doi:10.1007/s00220-021-04026-w}}.

\bibitem[Wil08]{williamsthesis}
Brian~R. Williams.
\newblock {\em The Holomorphic \(\sigma\)-Model and its Symmetries}.
\newblock PhD thesis, Northwestern University, June 2008.
\newblock \href {https://doi.org/10.21985/N2P77S} {\path{doi:10.21985/N2P77S}}.

\end{thebibliography}
\end{document}